  \def\cC{{\mathcal{C}}} 
\def\cE{{\mathcal{E}}} \def\cF{{\mathcal{F}}}  \def\cH{{\mathcal{H}}}
\def\cI{{\mathcal{I}}}   
\def\cM{{\mathcal{M}}} \def\cN{{\mathcal{N}}}  \def\cP{{\mathcal{P}}}
 \def\cR{{\mathcal{R}}}  
  \def\cW{{\mathcal{W}}}
\def\ba{{\mathbf{a}}} \def\bb{{\mathbf{b}}}   \def\be{{\mathbf{e}}}
\def\bff{{\mathbf{f}}}    
   \def\bn{{\mathbf{n}}} 
 \def\bv{{\mathbf{v}}} \def\bw{{\mathbf{w}}}
  \def\bH{{\mathbf{H}}} \def\bI{{\mathbf{I}}} 
   \def\bS{{\mathbf{S}}}
\def\argmin{\mathop{\mathrm{argmin}}}
\def\argmax{\mathop{\mathrm{argmax}}}
     \def\d4{\!\!\!\!}
  \def\R{{\mathbb{R}}} \def\C{{\mathbb{C}}}   \def\B{{\mathbb{B}}}
  \def\-{\! - \!}  \def\+{\! + \!}  \def\={\! = \!}  \def\>{\! > \!}
\newtheorem{lemma}{Lemma}
\newtheorem{remark}{Remark}
\newcommand{\bef}{\begin{figure}}
\newcommand{\eef}{\end{figure}}
\newcommand{\beq}{\begin{eqnarray}}
\newcommand{\eeq}{\end{eqnarray}}
\newenvironment{proof}[1][Proof]{\begin{trivlist}
\item[\hskip \labelsep {\bfseries #1}]}{\end{trivlist}}
\newcommand{\qed}{\nobreak \ifvmode \relax \else
\ifdim\lastskip<1.5em \hskip-\lastskip \hskip1.5em plus0em
minus0.5em \fi \nobreak \vrule height0.5em width0.5em
depth0.25em\fi}
\begin{document}
\title{Enhanced Beam Alignment for Millimeter Wave MIMO Systems: A Kolmogorov Model}
\author{Qiyou Duan$^1$, Taejoon Kim$^2$, and Hadi Ghauch$^3$ \\ 
\IEEEauthorblockA{ \small $^1$Department of Electrical Engineering, City University of Hong Kong, Kowloon, Hong Kong \\
                   \small $^2$Department of Electrical Engineering and Computer Science, The University of Kansas, Lawrence, KS $66045\!-\!7608$, USA\\
                   \small $^3$Department of COMELEC, Telecom-ParisTech, Paris, France\\
                   \small Email: qyduan.ee@my.cityu.edu.hk$^1$, taejoonkim@ku.edu$^2$, hadi.ghauch@telecom-paristech.fr$^3$ }}

\maketitle

\begin{abstract}
We present an enhancement to the problem of beam alignment in millimeter wave (mmWave) multiple-input multiple-output (MIMO) systems, based on a modification of the machine learning-based criterion, called Kolmogorov model (KM), previously applied to the beam alignment problem. Unlike the previous KM, whose computational complexity is not scalable with the size of the problem, a new approach, centered on discrete monotonic optimization (DMO), is proposed, leading to significantly reduced complexity. We also present a Kolmogorov-Smirnov (KS) criterion for the advanced hypothesis testing, which does not require any subjective threshold setting compared to the frequency estimation (FE) method developed for the conventional KM. Simulation results that demonstrate the efficacy of the proposed KM learning for mmWave beam alignment are presented.
\end{abstract}


\section{Introduction}  \label{Section I}
A fundamental bottleneck in operating large-dimensional millimeter wave (mmWave) array antenna systems is how to accurately align beams between the transmitter and receiver in low latency \cite{Heath16, Hur13}. The use of directional narrow beams for searching the entire beam space (also called exhaustive beam search) is an extremely time-consuming operation; the exhaustive beam search has been used in existing mmWave WiFi standards including IEEE 802.15.3c \cite{Std1} and IEEE 802.11ad \cite{Std2}, for example. For reduced overhead beam alignment, hierarchical codebooks \cite{Hur13,Alkhateeb14}, compressed sensing-based algorithms \cite{Sun17,Zhang18}, overlapped beam pattern \cite{Kokshoorn17} and beam coding \cite{Shabara19} have been proposed over the years, establishing a ``structured beam alignment'' paradigm. Despite a plethora of such beam alignment methods, the overhead issue still remains a critical challenge in mmWave communications.

Recently, the beam alignment problem has been approached in a statistical-machine-learning point-of-view \cite{Chan19}, with a primary focus on an application of the Kolmogorov model (KM) \cite{Ghauch18}. In \cite{Chan19}, Kolmogorov elementary representations (KERs) of the received signal power values that are associated with the beam pairs in a training beam codebook are learned by solving a constrained error minimization problem. In doing so, the KERs of unsounded beam pairs are predicted by exploiting the predictive power of the KM, leading to a significantly reduced beam alignment overhead. However, there are two fundamental limitations to the conventional KM learning in the beam alignment context. First, the computational complexity of the KM training algorithm in \cite{Chan19,Ghauch18} is prohibitively high; the complexity is not scalable with the number of antennas and the size of codebooks. Second, the initial work in \cite{Chan19} centers on a frequency estimation (FE) method to estimate empirical probabilities of the training set, which has to rely on a threshold setting for hypothesis testing; the threshold value is treated as a hyper-parameter, which is determined based on numerical simulations. Ultimately, the desired threshold setting must account for a specific performance criterion so as to improve the predictive power of KM.

In fact, in mmWave-based systems, quality of service is primarily dominated by latency \cite{Yang18}. In particular, the requirements of low latency and overhead are perhaps even more critical than those for high throughput. Motivated by this, we propose an enhancement to the problem of mmWave multiple-input multiple-output (MIMO) beam alignment by leveraging discrete monotonic optimization (DMO) frameworks \cite{Tuy06,Kim15}, leading to a significantly reduced amount of computational complexity compare to the previous KM \cite{Chan19}. We also propose a new threshold approach to obtaining empirical probabilities of the training set, which improves the performance of hypothesis testing for the FE of KM. Our approach is based on utilizing the Kolmogorov-Smirnov (KS) test criterion \cite{Zhang10,Marcum15}, which is desired because it can set a detection threshold without access to a priori knowledge.

The remainder of the paper is organized as follows. In Section \ref{Section II}, we introduce the system model and briefly review the related work on the KM-based beam alignment. In Section \ref{Section III}, we propose the DMO algorithm to solve the KM learning optimization problem and provide a new method building the empirical training statistics via the KS test. In Section \ref{Section IV}, simulation results are presented to illustrate the superior performance of the proposed algorithm. Finally, we conclude the paper in Section \ref{Section V}.

\section{System Model and Previous Work} \label{Section II}

We present the beam alignment system model and provide an overview of the previous work under consideration.

\subsection{System Model} \label{Subsection II-A}

Suppose a point-to-point mmWave MIMO system where an independent block fading channel with a coherence block length $T_B$ (channel uses) is assumed. The transmitter and receiver are equipped with $N_t$ and $N_r$ antennas, respectively. For simplicity, we adopt a low-complexity architecture where only one radio-frequency (RF) chain is employed at both the transmitter and receiver sides.

During a coherence block $T_B$, the transmitter and receiver intend to spend $K$ ($K\ll T_B$) channel uses to align the best transmit and receive beam pair for data transmission. To be specific, the transmitter and receiver choose an analog beamformer $\bff_t \in \C^{N_t\times 1}$ and combiner $\bw_r \in \C^{N_r\times 1}$ from the pre-designed beam sounding codebooks $\cF$ and $\cW$ such that $\bff_t \in\cF$ and $\bw_r\in\cW$, respectively. We denote the index sets of $\cF$ and $\cW$ as $\cI_{\cF}$ and $\cI_{\cW}$, respectively, with cardinalities $|\cI_{\cF}|$ and $|\cI_{\cW}|$. Assume that $\bff_t$ and $\bw_r$ are unit-norm, i.e., $\|\bff_t\|_2=\|\bw_r\|_2=1$. The received signal associated with the beam pair $(\bff_t,\bw_r)$ is therefore given by
\beq \label{received signal for (t,r)}
y_{t,r} \!=\! \bw_r^*(\bH\bff_t s_t \!+\! \bn) \!=\! \bw_r^*\bH\bff_t s_t \!+\! n_r, \forall (t,r)\! \in\! \cI_{\cF}\!\times\! \cI_{\cW},
\eeq
where $\bH\in\C^{N_r\times N_t}$ is the channel matrix and $s_t\in\C$ is the training symbol satisfying $\|\bff_t s_t\|_2^2 = 1$. $\bn\in\C^{N_r\times 1}$ is the additive complex white Gaussian noise vector with each entry independently and identically distributed (i.i.d.) as zero mean and $\sigma_n^2$ variance according to $\cC\cN(0,\sigma_n^2)$. $n_r\triangleq \bw_r^*\bn \sim \cC\cN(0,\sigma_n^2)$ is the effective additive noise, and thus, the signal-to-noise ratio (SNR) is $1/\sigma_n^2$.

Exhaustive beam alignment (beam sounding) is a widely used method: the transmitter and receiver jointly sound all the beams in $\cF$ and $\cW$ to find the optimal beam pair that maximizes the received signal power
\beq \label{optimal beam pair selection}
(\bff_{t^\star}, \bw_{r^\star}) = \argmax_{(\bff_t, \bw_r), (t,r)\in \cI_{\cF}\times \cI_{\cW}} \{\eta_{t,r}\triangleq |y_{t,r}|^2\}. \nonumber
\eeq
In fact, the training overhead for the exhaustive method is $\vert \cI_{\cF}\times \cI_{\cW} \vert$. Since the size of the codebooks $\vert \cI_{\cF}\vert$ and $\vert \cI_{\cW}\vert$ is large in mmWave cellular networks, the drastic training overhead of exhaustive beam alignment overwhelms the available coherent channel resources. To tackle this issue, a learning-based approach, KM, was proposed to reduce the beam alignment overhead while maintaining appreciable beam alignment performance \cite{Chan19}.

\subsection{Previous Work: KM-Based Beam Alignment} \label{Subsection II-B}
A binary random variable $X_{t,r}\in\{0,1\}$ is introduced to indicate the ``good'' and ``poor'' quality of the beam pair $(\bff_t,\bw_r)$ for $(t,r)\in \cI_{\cF}\times \cI_{\cW}$ as
\beq \label{binary model}
\begin{cases}
  \Pr(\eta_{t,r}\geq \tau) = \Pr(X_{t,r} = 1) \\
  \Pr(\eta_{t,r}< \tau) = \Pr(X_{t,r} = 0)
\end{cases}\!\!\!\!\!\!, \nonumber
\eeq
where $\Pr(\cE)\in [0,1]$ denotes the probability of the event $\cE$, $\tau$ is a pre-designed threshold value for the received signal power. We say that the beam pair $(\bff_t,\bw_r)$ has a ``good'' SNR, if $\eta_{t,r}\geq \tau$. Because $\Pr(X_{t,r} = 1)+\Pr(X_{t,r} = 0)=1$, it suffices to focus on the case when $X_{t,r} = 1$. The $D$-dimensional KER of $X_{t,r}$ is then defined by \cite{Ghauch18}
\beq \label{KER}
\Pr(X_{t,r}=1) = \boldsymbol{\theta}_t^T\boldsymbol{\psi}_r,\ \forall (t,r)\in \cI_{\cF}\times \cI_{\cW},
\eeq
where the probability mass function vector $\boldsymbol{\theta}_t$ is on the unit probability simplex $\cP$, i.e., $\boldsymbol{\theta}_t\in\R_+^D$ and $\boldsymbol{1}^T\boldsymbol{\theta}_t=1$, $\boldsymbol{1}$ is the all-one vector with dimension $D$, and $\boldsymbol{\psi}_r\in\B^D$ denotes the binary indicator vector of dimension $D$ such that the $d$th entry of $\boldsymbol{\psi}_r$ is $\psi_{r,d}\in\{0,1\}$.

The beam alignment using KM relies on the subsampled codebooks with index sets $\cI_{\cF}^{\text{train}}$ and $\cI_{\cW}^{\text{train}}$, such that $\cI_{\cF}^{\text{train}}\subset \cI_{\cF}$ and $\cI_{\cW}^{\text{train}}\subset \cI_{\cW}$, and have much smaller sizes, $|\cI_{\cF}^{\text{train}}|\ll |\cI_{\cF}|$ and $|\cI_{\cW}^{\text{train}}|\ll |\cI_{\cW}|$ \cite{Chan19}. We let the empirical probability that beam pair $(\bff_t,\bw_r)$ has a ``good'' SNR be $p_{t,r}$. In \cite{Chan19}, a FE method was proposed to build the training set of empirical probabilities of beam pairs in the subsampled codebooks for the KM learning algorithm, i.e., $\{p_{t,r}\}$, $\forall(t,r)\in\cI_{\cF}^{\text{train}}\times\cI_{\cW}^{\text{train}}$. Given the FE interval $T_{\text{FE}}$, the estimate of $p_{t,r}$ at time-slot $\varphi$, i.e., $p^{(\varphi)}_{t,r}$, is provided by
\beq \label{FE}
p_{t,r}^{(\varphi)} = \frac{1}{\varphi}\sum_{l=1}^{\varphi} \mathbb{I}(\eta_{t,r}^{(l)}\geq \tau),\ \varphi\in\{1,\ldots,T_{\text{FE}}\},
\eeq
where $\eta_{t,r}^{(l)}$ is the received signal power obtained by sounding the beam pair $(\bff_t, \bw_r)$ at time-slot $l\in\{1,\ldots,\varphi\}$ and $\mathbb{I}(\cdot)$ denotes the indicator function. The best FE estimate comes from $p_{t,r}^{(T_{\text{FE}})}$, which is carried out at the end of the estimation interval.

\begin{figure}[t]
\centering
\includegraphics[width=8.0cm, height=5.5cm]{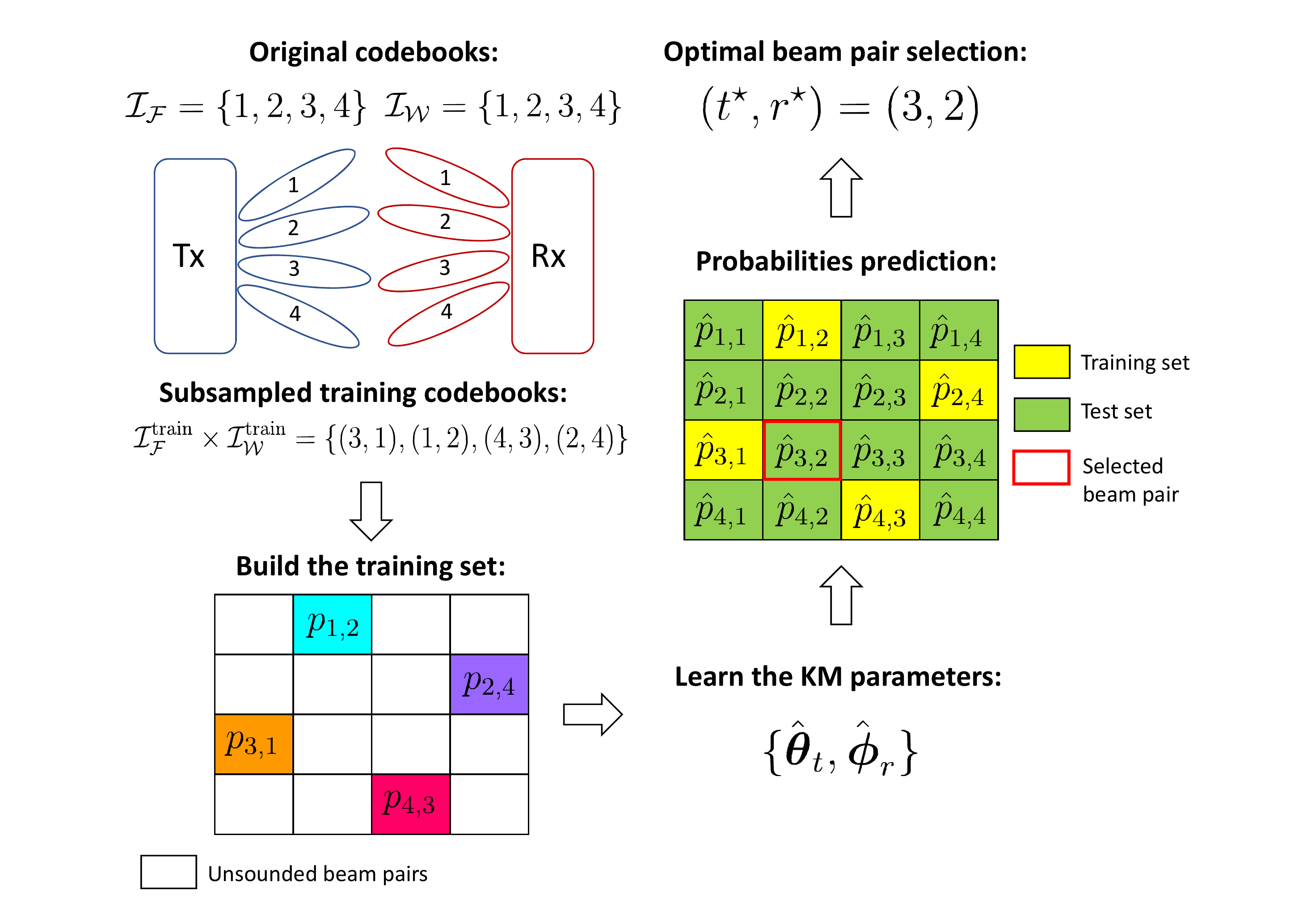}
\caption{Diagram of KM-based beam alignment $(|\cI_{\cF}|=|\cI_{\cW}|=4)$.} \label{fig.0}
\end{figure}

Once the training set (of empirical probabilities) is constructed, the KM learning algorithm proceeds to optimize the KM parameter vectors $\{\boldsymbol{\theta}_t\}$ and $\{\boldsymbol{\psi}_r\}$ by solving the constrained error minimization problem:
\beq \label{KM problem formulation}
\begin{split}
\{\hat{\boldsymbol{\theta}}_t\}, \{\hat{\boldsymbol{\psi}}_r\} \!&=\!\!\!\argmin_{\{\boldsymbol{\theta}_t\},\{\boldsymbol{\psi}_r\}}\!\!\! \sum_{(t,r)\in\cI_{\cF}^{\text{train}}\times \cI_{\cW}^{\text{train}}}\!\!\! ( \boldsymbol{\theta}_t^T\boldsymbol{\psi}_r - p_{t,r} )^2 \\
   & \text{s.t.}\ \boldsymbol{\theta}_t \in \cP, \forall t \in \cI_{\cF}^{\text{train}}, \boldsymbol{\psi}_r\in \B^D, \forall r \in \cI_{\cW}^{\text{train}}
\end{split}.
\eeq
In order to handle the coupled non-convex combinatorial optimization in \eqref{KM problem formulation}, a block-coordinate descent (BCD) method \cite{Ghauch18,Chan19} was proposed by dividing the problem in \eqref{KM problem formulation} into two subproblems:
(i) linearly-constrained  quadratic program (LCQP):
\beq \label{subproblem 1: LCQP}
\min_{\boldsymbol{\theta}_t \in \cP}\ \boldsymbol{\theta}_t^T\bS_t\boldsymbol{\theta}_t - 2\boldsymbol{\theta}_t^T\bv_t + \rho_t,
\eeq
where $\bS_t\triangleq \sum_{r \in \cI_{\cW}^{\text{train}}} \boldsymbol{\psi}_r\boldsymbol{\psi}_r^T$, $\bv_t \triangleq \sum_{r \in \cI_{\cW}^{\text{train}}} \boldsymbol{\psi}_r p_{t,r}$, and $\rho_t \triangleq \sum_{r \in \cI_{\cW}^{\text{train}}} p_{t,r}^2$, and
(ii) binary quadratic program (BQP):
\beq \label{subproblem 2: BQP}
\min_{\boldsymbol{\psi}_r\in \B^D}\ \boldsymbol{\psi}_r^T\bS_r\boldsymbol{\psi}_r - 2\bv_r^T\boldsymbol{\psi}_r + \rho_r,
\eeq
where $\bS_r\triangleq \sum_{t\in \cI_{\cF}^{\text{train}}} \boldsymbol{\theta}_t\boldsymbol{\theta}_t^T$, $\bv_r\triangleq \sum_{t\in \cI_{\cF}^{\text{train}}} \boldsymbol{\theta}_t p_{t,r}$ and $\rho_r \triangleq \sum_{t\in \cI_{\cF}^{\text{train}}} p_{t,r}^2$.
The KM solves the two subproblems in \eqref{subproblem 1: LCQP} and \eqref{subproblem 2: BQP} in an alternative way and iteratively refines the KM parameters $\{\boldsymbol{\theta}_t\}$ and $\{\boldsymbol{\psi}_r\}$.
More specifically, by exploiting the fact that the optimization in \eqref{subproblem 1: LCQP} is carried out over the unit probability simplex, a simple iterative Frank-Wolfe (FW) algorithm \cite{Jaggi13} was proposed to optimally solve \eqref{subproblem 1: LCQP}, while the semi-definite relaxation with randomization (SDRwR) was employed to optimally solve \eqref{subproblem 2: BQP} asymptotically in $D$ \cite{Kisialiou10}.

We let $\{\hat{\boldsymbol{\theta}}_t, \hat{\boldsymbol{\psi}}_r\}$ be the learned KM parameters to the problem in \eqref{KM problem formulation}. The predictive power of KM is exploited to infer the probabilities of the test set (i.e., beam pairs which are not sounded) as
\beq \label{probabilities prediction}
\hat{p}_{t,r}\triangleq \hat{\boldsymbol{\theta}}_t^T\hat{\boldsymbol{\psi}}_r,\ \forall (t,r)\in (\cI_{\cF}\times\cI_{\cW})\backslash (\cI_{\cF}^{\text{train}}\times\cI_{\cW}^{\text{train}}).
\eeq
Finally, the optimal beam pair with the highest probability of having a ``good'' SNR is selected by evaluating both the training and test sets as
\beq \label{optimal beam index pair selection}
(t^\star, r^\star)=\argmax_{(t,r)\in\cI_{\cF}\times\cI_{\cW}} \{\hat{p}_{t,r}=\hat{\boldsymbol{\theta}}_t^T\hat{\boldsymbol{\psi}}_r\}.
\eeq
A diagram of the KM-based beam alignment, which conceptually visualizes the system model and the framework, can be found in Fig. \ref{fig.0}.

\subsubsection{Desired Attributes of KM}
There are three main advantages of KM that make it superior to other data representations such as matrix factorization (MF) \cite{Koren09}, SVD-based representations \cite{Koren08}, and nonnegative MF \cite{Lee01}: (i) the fact that the KM in \eqref{KER} represents an actual probability is exploited to model the quality of beam pairs in terms of SNR, (ii) KM offers improved prediction performance over nonnegative MF \cite{Stark16}, and (iii) the interpretability of the KM in \eqref{KER} , namely, the insight that it exhibits about the data, which is not possible with other learning methods that fall under the black-box type.

\subsubsection{Main Contribution of This Work}
While the SDRwR method in solving \eqref{subproblem 2: BQP} is asymptotically optimal \cite{Chan19,Ghauch18} it demands huge computational cost and thus violates the low-latency requirement in the mmWave communications \cite{Yang18}. Moreover, the lack of an appropriate threshold design criterion of the FE method in \cite{Chan19} limits the beam alignment performance of the KM-based approach. To address the above limitations, we first propose an enhanced KM learning algorithm for beam alignment by leveraging DMO. A novel empirical probability estimation method based on the KS test is then provided with a proper threshold selection criterion. The proposed algorithm exhibits better beam alignment performance with a significantly reduced computational time compared to the existing work.

\section{Proposed Algorithm} \label{Section III}
To reduce the prohibitively high computational cost of SDRwR, in this section, a DMO framework is proposed. Moreover, a new method based on the KS test is presented.

\subsection{Discrete Monotonic Optimization} \label{Section III.A}
Prior to delivering the proposed algorithm, we provide a lemma showing an equivalent reformulation of the problem in \eqref{subproblem 2: BQP}.
\begin{lemma} \label{BQP to DMF}
The BQP problem in \eqref{subproblem 2: BQP} is equivalent to the maximization of a difference of two monotonically increasing functions and the binary constraints $\boldsymbol{\psi}_r\in \B^D$ in \eqref{subproblem 2: BQP} is equivalently transformed to continuous monotonic constraints:
\beq \label{reformulated DMF}
\begin{split}
& \max_{\boldsymbol{\psi}_r} \left\{f(\boldsymbol{\psi}_r) = f^+(\boldsymbol{\psi}_r) - f^-(\boldsymbol{\psi}_r) \right\} \\
& \text{s.t.}\ g(\boldsymbol{\psi}_r) - h(\boldsymbol{\psi}_r) \leq 0, \boldsymbol{\psi}_r \in [\boldsymbol{0}, \boldsymbol{1}]
\end{split}\ ,
\eeq
where $f^+(\boldsymbol{\psi}_r)\triangleq 2\bv_r^T\boldsymbol{\psi}_r$, $f^-(\boldsymbol{\psi}_r)\triangleq \boldsymbol{\psi}_r^T\bS_r\boldsymbol{\psi}_r$, $g(\boldsymbol{\psi}_r)\triangleq \sum_{d=1}^{D}\psi_{r,d}$, $h(\boldsymbol{\psi}_r)\triangleq \sum_{d=1}^{D}\psi_{r,d}^2$, and $\boldsymbol{\psi}_r \in [\boldsymbol{0}, \boldsymbol{1}]$ indicates that $0\leq \psi_{r,d} \leq 1$ for every $d=1,\ldots,D$.
\end{lemma}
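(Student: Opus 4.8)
The plan is to establish the equivalence in two parts: first showing that the binary constraint $\boldsymbol{\psi}_r\in\B^D$ is equivalent to the combination of the box constraint $\boldsymbol{\psi}_r\in[\boldsymbol{0},\boldsymbol{1}]$ with the reverse-convex inequality $g(\boldsymbol{\psi}_r)-h(\boldsymbol{\psi}_r)\leq 0$, and second identifying the objective of \eqref{subproblem 2: BQP} (a minimization) with the maximization of $f^+-f^-$ after a sign flip. The first part is the crux. The key observation is that for a scalar $x\in[0,1]$ one always has $x^2\leq x$, with equality if and only if $x\in\{0,1\}$; summing over the $D$ coordinates gives $h(\boldsymbol{\psi}_r)\leq g(\boldsymbol{\psi}_r)$ for every $\boldsymbol{\psi}_r\in[\boldsymbol{0},\boldsymbol{1}]$, with equality exactly when every coordinate is $0$ or $1$. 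Hence, within the box, the inequality $g(\boldsymbol{\psi}_r)-h(\boldsymbol{\psi}_r)\leq 0$ can only be satisfied at its boundary-of-equality, which is precisely the vertex set $\B^D$. So $\{\boldsymbol{\psi}_r\in[\boldsymbol{0},\boldsymbol{1}]:g(\boldsymbol{\psi}_r)-h(\boldsymbol{\psi}_r)\leq 0\}=\B^D$, giving the claimed feasible-set equivalence.

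Next I would handle the objective. Starting from \eqref{subproblem 2: BQP}, $\min_{\boldsymbol{\psi}_r\in\B^D}\ \boldsymbol{\psi}_r^T\bS_r\boldsymbol{\psi}_r-2\bv_r^T\boldsymbol{\psi}_r+\rho_r$, the constant $\rho_r$ does not affect the minimizer and can be dropped; negating turns the minimization into $\max_{\boldsymbol{\psi}_r\in\B^D}\ 2\bv_r^T\boldsymbol{\psi}_r-\boldsymbol{\psi}_r^T\bS_r\boldsymbol{\psi}_r = \max\{f^+(\boldsymbol{\psi}_r)-f^-(\boldsymbol{\psi}_r)\}$. Combined with the feasible-set equivalence from the previous paragraph, this yields exactly \eqref{reformulated DMF}. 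It remains to check the monotonicity claims: $f^+(\boldsymbol{\psi}_r)=2\bv_r^T\boldsymbol{\psi}_r$ is coordinatewise nondecreasing because $\bv_r=\sum_{t}\boldsymbol{\theta}_t p_{t,r}$ has nonnegative entries (each $\boldsymbol{\theta}_t\in\cP$ lies in the nonnegative orthant and $p_{t,r}\in[0,1]$); $f^-(\boldsymbol{\psi}_r)=\boldsymbol{\psi}_r^T\bS_r\boldsymbol{\psi}_r$ is nondecreasing on $[\boldsymbol{0},\boldsymbol{1}]$ because $\bS_r=\sum_{t}\boldsymbol{\theta}_t\boldsymbol{\theta}_t^T$ is entrywise nonnegative, so its gradient $2\bS_r\boldsymbol{\psi}_r$ has nonnegative components on the box; and $g,h$ are manifestly nondecreasing on $[\boldsymbol{0},\boldsymbol{1}]$ since each summand $\psi_{r,d}$ and $\psi_{r,d}^2$ is increasing in $\psi_{r,d}\in[0,1]$. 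Thus $f=f^+-f^-$ is a difference of two monotonically increasing functions and $g-h\leq 0$ is a difference-of-monotone (reverse-normal) constraint, as required by the DMO formulation.

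The main obstacle I anticipate is not any single calculation but making the set equivalence airtight: one must argue both inclusions — that every binary point satisfies $g-h=0\leq 0$ and the box (trivial), and conversely that no non-binary point in the box can satisfy $g-h\leq 0$, which relies on the strict inequality $x^2<x$ for $x\in(0,1)$. A secondary subtlety worth a sentence is that replacing $\B^D$ by this relaxed description does not enlarge the optimal value precisely because the feasible sets coincide, not merely because one contains the other; so the argument should be phrased as a set identity rather than a relaxation-and-tightness argument. Everything else — dropping $\rho_r$, the sign flip, and the nonnegativity checks that give monotonicity — is routine given the definitions of $\bS_r$, $\bv_r$ and the simplex constraint on $\{\boldsymbol{\theta}_t\}$.
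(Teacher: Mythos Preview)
Your proposal is correct and follows essentially the same approach as the paper: drop the constant $\rho_r$ and flip the sign to pass from minimization to maximization, rewrite the binary constraint via $\sum_d \psi_{r,d}(1-\psi_{r,d})\le 0$ on the box $[\boldsymbol{0},\boldsymbol{1}]$, and verify that $f^+,f^-,g,h$ are increasing. Your justification for the monotonicity of $f^-$ (via the entrywise nonnegativity of $\bS_r=\sum_t\boldsymbol{\theta}_t\boldsymbol{\theta}_t^T$ with $\boldsymbol{\theta}_t\ge\boldsymbol{0}$) is in fact sharper than the paper's, which invokes only positive semidefiniteness --- a property that by itself does not guarantee coordinatewise monotonicity on the box.
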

\begin{proof}
  Given the definition of $f^+$ and $f^-$ in \eqref{reformulated DMF}, the objective function $f$ in \eqref{reformulated DMF} is attained by transforming the minimization to the maximization and discarding the constant $\rho_r$ in \eqref{subproblem 2: BQP}. Also, $f^+$ and $f^-$ are both increasing functions with respect to $\boldsymbol{\psi}_r\in [\boldsymbol{0}, \boldsymbol{1}]$ because $\bv_r>\boldsymbol{0}$ and $\bS_r$ is a positive semi-definite matrix. The binary constraints $\psi_{r,d}\in\{0,1\}$, $d=1,\ldots,D$, can be equivalently rewritten as $\sum_{d=1}^{D}\psi_{r,d}(1-\psi_{r,d})\leq 0$, $\psi_{r,d}\in[0,1]$, $\forall d$, i.e., $g(\boldsymbol{\psi}_r) - h(\boldsymbol{\psi}_r) \leq 0$, $\boldsymbol{\psi}_r \in [\boldsymbol{0}, \boldsymbol{1}]$ in \eqref{reformulated DMF}, where $g$ and $h$ are increasing on $\R_+^D$. This completes the proof.
\end{proof}

The BQP problem in \eqref{subproblem 2: BQP} cannot be directly handled due to the discrete constraints. In \cite{Chan19}, this nuisance has been tackled by using SDRwR, which incurs impractical computational complexity. Unlike SDRwR, the equivalent problem formulation leveraging the difference of monotonic functions (DMF) in \eqref{reformulated DMF} disinvolves the intractable discrete constraints without any relaxation. Motivated by Lemma \ref{BQP to DMF}, we propose to use a branch-reduce-and-bound (BRB) approach \cite{Tuy06} to directly solves \eqref{reformulated DMF} without any relaxation and/or randomization. As will be seen in Fig. \ref{fig.1} in Section \ref{Section IV}, the proposed DMO algorithm can substantially reduce the computational complexity (two-orders-of-magnitude improvement in time complexity).
We introduce the following three main steps at each iteration in the proposed DMO algorithm, where the overall procedure is presented in detail in Algorithm \ref{DMO-based Algorithm}.

\begin{algorithm}[t]
\caption{DMO Algorithm} \label{DMO-based Algorithm}
\begin{algorithmic}[1]
\Require
$\bS_r$, $\bv_r$, and $D$.
\Ensure
$\boldsymbol{\psi}_r^\star$.
\State Initialization: Set iteration number $i=1$. Let $\cP_i = \{M\}$, $M = [\boldsymbol{0}, \boldsymbol{1}]$, $\cR_i = \phi$, and $\nu = f(\boldsymbol{0}) = 0$.
\State Reduction: Reduce each box in $\cP_i$ according to \eqref{reduced box-lower vertex} and \eqref{reduced box-upper vertex} to obtain $\cP_i^\prime = \{[\ba^\prime, \bb^\prime]|[\ba, \bb]\in\cP_i\}$. \label{returning step}
\State Bounding: Calculate $\mu(M^\prime)$ in \eqref{bounding} for each $M^\prime\in\cM_i\triangleq\cP_i^\prime\cup\cR_i$.
\State Find the feasible solution: $\boldsymbol{\psi}_r^{(i)}=\argmax_{\boldsymbol{\psi}_r}\{f(\boldsymbol{\psi}_r)>\nu| \boldsymbol{\psi}_r=\lceil(\ba^\prime+\bb^\prime)/2\rceil, M^\prime=[\ba^\prime, \bb^\prime]\in\cM_i\}$. \label{Step: find the feasible solution}
\State Update current best value: If $\boldsymbol{\psi}_r^{(i)}$ in Step \ref{Step: find the feasible solution} exists, update $\nu$ as $\nu = f(\boldsymbol{\psi}_r^{(i)})$; otherwise, $\boldsymbol{\psi}_r^{(i)}=\boldsymbol{\psi}_r^{(i-1)}$ and $\nu$ doesn't change.
\State Discarding: Delete every $M^\prime\in\cM_i$ such that $\mu(M^\prime)<\nu$ and let $\cR_{i+1}$ be the collection of remaining boxes.
\If {$\cR_{i+1}=\phi$} terminate and
\Return $\boldsymbol{\psi}_r^\star = \boldsymbol{\psi}_r^{(i)}$.
\Else
\State Let $M^{(i)} = \argmax_{M^\prime}\{\mu(M^\prime)|M^\prime\in\cR_{i+1}\}$.
   \If {$\nu\geq \varepsilon \mu(M^{(i)})$} $\varepsilon$-accuracy is reached and \Return $\boldsymbol{\psi}_r^\star = \boldsymbol{\psi}_r^{(i)}$.
   \Else
   \State Branching: Divide $M^{(i)}$ into $M^{(i)}_1$ and $M^{(i)}_2$ according to \eqref{branched box 1} and \eqref{branched box 2}.
   \State Update $\cR_{i+1}$ and $\cP_{i+1}$: $\cR_{i+1}=\cR_{i+1}\backslash M^{(i)}$ and $\cP_{i+1}=\{M^{(i)}_1, M^{(i)}_2\}$.
   \EndIf
\EndIf
\State $i=i+1$ and \Return to Step \ref{returning step}.
\end{algorithmic}
\end{algorithm}

\subsubsection{Reduction}
We let $M=[\ba, \bb]$ be one of the boxes that contain feasible solutions to \eqref{reformulated DMF} and $\nu$ be the current maximum value of the objective function $f$ in \eqref{reformulated DMF}. The reduced box $M^\prime=[\ba^\prime, \bb^\prime]\subset[\ba, \bb]$ can be defined by new lower and upper vertices $\ba^\prime$ and $\bb^\prime$, respectively, without excluding any feasible solution $\boldsymbol{\psi}_r\in[\ba, \bb]$, while maintaining $f(\boldsymbol{\psi}_r)\geq\nu$ \cite{Tuy06} as
\beq
\ba^\prime = \bb - \sum_{d=1}^{D}\alpha_d(b_d - a_d)\be_d, \label{reduced box-lower vertex} \\
\bb^\prime = \ba^\prime + \sum_{d=1}^{D}\beta_d(b_d - a^\prime_d)\be_d, \label{reduced box-upper vertex}
\eeq
where $\alpha_d=\sup\{\alpha|\alpha\in[0,1], g(\ba)- h(\bb-\alpha(b_d-a_d)\be_d)\leq 0, f^+(\bb-\alpha(b_d-a_d)\be_d)-f^-(\ba)\geq \nu\}$ and $\beta_d = \sup\{\beta|\beta\in[0,1], g(\ba^\prime + \beta(b_d - a^\prime_d)\be_d) - h(\bb) \leq 0, f^+(\bb) - f^-(\ba^\prime + \beta(b_d - a^\prime_d)\be_d)\geq \nu\}$ for $d=1,\ldots,D$, where $\be_d$ is the $d$th column of the $D$-dimensional identity matrix $\bI_D$. Note that the optimal values of $\alpha_d$ and $\beta_d$ can be found by referring to the compactness of $\alpha, \beta\in[0,1]$ and utilizing the monotonicity of $f^+$, $f^-$, $g$, and $h$ (for instance, by using a bisection method) \cite{Kim15}.

\subsubsection{Bounding}
For every reduced box $M^\prime$, an upper bound of $\nu(M^\prime)\triangleq \max\{f(\boldsymbol{\psi}_r)|$ $g(\boldsymbol{\psi}_r)-h(\boldsymbol{\psi}_r)\leq0, \boldsymbol{\psi}_r\in M^\prime\cap [\boldsymbol{0}, \boldsymbol{1}]\}$ is calculated such that
\beq \label{bounding}
\nu(M^\prime)\leq \mu(M^\prime)= f^+(\bb^\prime) - f^-(\ba^\prime).
\eeq
The upper bound $\mu(M^\prime)$ in \eqref{bounding} holds because $f^+$ and $f^-$ are monotonically increasing functions. Furthermore, $\mu(M^\prime)$ ensure $\lim_{k\rightarrow\infty} \mu(M^\prime_k) = f(\boldsymbol{\psi}_r^\star)$, where $\{M^\prime_k\}$ stands for any infinite nested sequence of boxes and $\boldsymbol{\psi}_r^\star$ is the optimal solution to \eqref{reformulated DMF}. At each iteration, any box $M^\prime$ with $\mu(M^\prime)<\nu$ is deleted because such a box does not contain $\boldsymbol{\psi}_r^\star$ anymore.

\subsubsection{Branching}
At the end of each iteration, the box with the maximum upper bound, denoted by $M^\star=[\ba^\star, \bb^\star]$, is selected and branched to accelerate the convergence of the algorithm. The box $M^\star$ is divided into two boxes
\beq
M_1^\star = \{\boldsymbol{\psi}_r\in M^\star|\psi_{r,j}\leq \lfloor c^\star_j \rfloor\}, \label{branched box 1}\\
M_2^\star = \{\boldsymbol{\psi}_r\in M^\star|\psi_{r,j}\geq \lceil c^\star_j \rceil\}, \label{branched box 2}
\eeq
where $j=\argmax_{d=1,\ldots,D} (b^\star_d - a^\star_d)$, $c^*_j = (a^\star_j + b^\star_j)/2$, $\lfloor\cdot\rfloor$ and $\lceil\cdot\rceil$ represent the element-wise floor and ceiling operations, respectively.

\begin{algorithm}[t]
\caption{Enhanced KM Learning for Beam Alignment} \label{Overall KM-learning-based Beam-alignment Algorithm}
\begin{algorithmic}[1]
\Require
$\cF$, $\cW$, $\cI_{\cF}^{\text{train}}$, $\cI_{\cW}^{\text{train}}$, $D$, $L$, $\alpha$, and $T_{KS}$.
\Ensure
$(t^\star, r^\star)$.
\State Estimate the empirical probabilities via KS test:
\For{each $\varphi=1,\ldots,T_{\text{KS}}$}
\For{each beam-index pair $(t,r)\in\cI_{\cF}^{\text{train}}\times\cI_{\cW}^{\text{train}}$}
\State Train the beam pair $(\bff_t, \bw_r)$ and obtain $Z_{t,r}^{(l)}$, $l\in\{1,\ldots,\varphi\}$ as in \eqref{detection statistics} based on $[\eta_{t,r}^{(1)}, \cdots, \eta_{t,r}^{(L)}]$.
\State Compute the empirical probabilities according to \eqref{empirical probability estimate via K-S test}.
\EndFor
\EndFor
\State Learn the KM parameters:
\For{$i=1,\ldots,I$}
\State $1)$ Update $\boldsymbol{\theta}_t^{(i)}$ via the FW algorithm \cite{Jaggi13};
\State $2)$ Update $\boldsymbol{\psi}_r^{(i)}$ via Algorithm \ref{DMO-based Algorithm}.
\EndFor
\State Obtain the final estimate $\{\hat{\boldsymbol{\theta}}_t=\boldsymbol{\theta}_t^{(I)}, \hat{\boldsymbol{\psi}}_r=\boldsymbol{\psi}_r^{(I)}\}$.
\State Compute the predicted probability for the beam pairs which are not trained yet based on \eqref{probabilities prediction}.
\State Determine the optimal beam index pair as in \eqref{optimal beam index pair selection}.
\State \Return $(t^\star, r^\star)$.
\end{algorithmic}
\end{algorithm}

The DMF optimization problem in \eqref{reformulated DMF} is solved by iteratively executing the latter three procedures until it converges within $\varepsilon$-accuracy as shown in Algorithm \ref{DMO-based Algorithm}.

\subsection{Kolmogorov-Smirnov Test} \label{Section III.B}
The choice of $\tau$ in \eqref{FE} has a profound impact on the beam alignment performance of the KM-based approach. The threshold value $\tau$ has been chosen subjectively based on numerical simulations \cite{Chan19}, which can substantially vary depending on the channel conditions and operating SNR. There lacks an appropriate selection criterion due in part to the fact that the statistics of $\eta_{t,r}$ are unknown in practice. We overcome this difficulty by proposing, in this subsection, to estimate the trained empirical probabilities $\{p_{t,r}\}$ by applying the detection-theoretic criterion for threshold setting introduced by Kolmogorov and Smirnov \cite{Millard1967,Zhang10}.

We first define the binary hypotheses of a beam pair $(\bff_t, \bw_r)$, $\forall (t,r)\in\cI_{\cF}^{train}\times\cI_{\cW}^{train}$ according to the signal model in \eqref{received signal for (t,r)} as
\beq \label{hypotheses}
\begin{split}
   &\cH_0: \ \eta_{t,r}=|n_r|^2 \\
   &\cH_1: \ \eta_{t,r}=|\bw_r^*\bH\bff_t s_t + n_r|^2,
\end{split} \nonumber
\eeq
where the null hypothesis $\cH_0$ is declared when $\eta_{t,r}$ relies on noise only and the alternative hypothesis $\cH_1$ is true when $\eta_{t,r}$ is a function of both the signal and noise.
While, under $\cH_0$, given $n_r\sim \cC\cN(0,\sigma_n^2)$, the theoretical cumulative distribution function (CDF) of $\eta_{t,r}$ is given by
\beq \label{theoretical CDF under H0}
F(\eta_{t,r}|\cH_0) = 1 - e^{-\frac{\eta_{t,r}}{\sigma_n^2}}, \nonumber
\eeq
the test statistics under $\cH_1$ is unknown. To circumvent this difficulty, KS test forms the empirical CDF of $\eta_{t,r}$ from the observed data samples $\eta_{t,r}^{(1)}, \cdots, \eta_{t,r}^{(L)}$,
\beq \label{empirical CDF}
F_L(x) = \frac{1}{L}\sum_{\ell=1}^{L} \mathbb{I}(\eta_{t,r}^{(\ell)}\leq x), \nonumber
\eeq
where $L$ denotes the number of the data samples in the KS test, which is distinguished from the time interval $T_{\text{FE}}$ in \eqref{FE}.

The KS criterion to estimate the best sample point is given by
\beq \label{detection statistics}
Z_{t,r} = \max_{x\in\R} |F_L(x) - F(x|\cH_0)|.
\eeq
The binary hypothesis test is then $Z_{t,r} \mathop{\gtrless}\limits_{\cH_0}^{\cH_1} \epsilon$,
where $\epsilon$ is the KS threshold value. Similar to conventional Neyman-Pearson, the threshold $\epsilon$ is chosen to meet the target false alarm rate $\alpha$ such that
\beq \label{false alarm rate}
\alpha \triangleq \Pr(Z_{t,r}\geq\epsilon|\cH_0) \approx 2e^{-2L\epsilon^2}, \nonumber
\eeq
where the last step is due to the Kolmogorov approximation \cite{Papoulis02}. The approximation becomes tight as $L$ tends to large such that the KS threshold can be determined by $\epsilon=\sqrt{-\frac{\ln(\alpha/2)}{2L}}$.
Finally, similar to \eqref{FE}, the KS-estimated empirical probability at time-slot $\varphi$ for any beam index pair $(t,r)\in\cI_{\cF}^{\text{train}}\times\cI_{\cW}^{\text{train}}$ is, therefore, given by
\beq \label{empirical probability estimate via K-S test}
p_{t,r}^{(\varphi)} = \frac{1}{\varphi}\sum_{l=1}^{\varphi}\mathbb{I}(Z^{(l)}_{t,r}\geq \epsilon), \ \varphi\in\{1,\ldots,T_{\text{KS}}\},
\eeq
where $Z^{(l)}_{t,r}$ is the detection statistic obtained by \eqref{detection statistics} at time-slot $l\in\{1,\ldots,\varphi\}$ and $T_{\text{KS}}$ denotes the KS estimation interval.

\begin{remark}
The key implication of the KS criterion in \eqref{detection statistics} is three folds: (i) the maximum value $Z_{t,r}$ converges to $0$ almost surely when $L$ tends to infinity if the data samples follows the distribution $F(\eta_{t,r}|\cH_0)$, (ii) the distribution of $Z_{t,r}$ does not depend on the underlying CDF being tested, and (iii) the maximum of difference between the CDFs stands for a jump/concentration in probability and thus becomes more representative to tell the difference of distribution compared to other statistics such as minimum and median.
\end{remark}

Incorporating Algorithm \ref{DMO-based Algorithm} to solve the BQP in \eqref{subproblem 2: BQP} and the KS test in \eqref{empirical probability estimate via K-S test} to estimate the empirical probabilities in \eqref{FE} instead of FE, we are ready to elucidate the overall proposed beam alignment procedure in Algorithm \ref{Overall KM-learning-based Beam-alignment Algorithm}.

\begin{figure}[t]
\centering
\includegraphics[width=7.2cm, height=5.2cm]{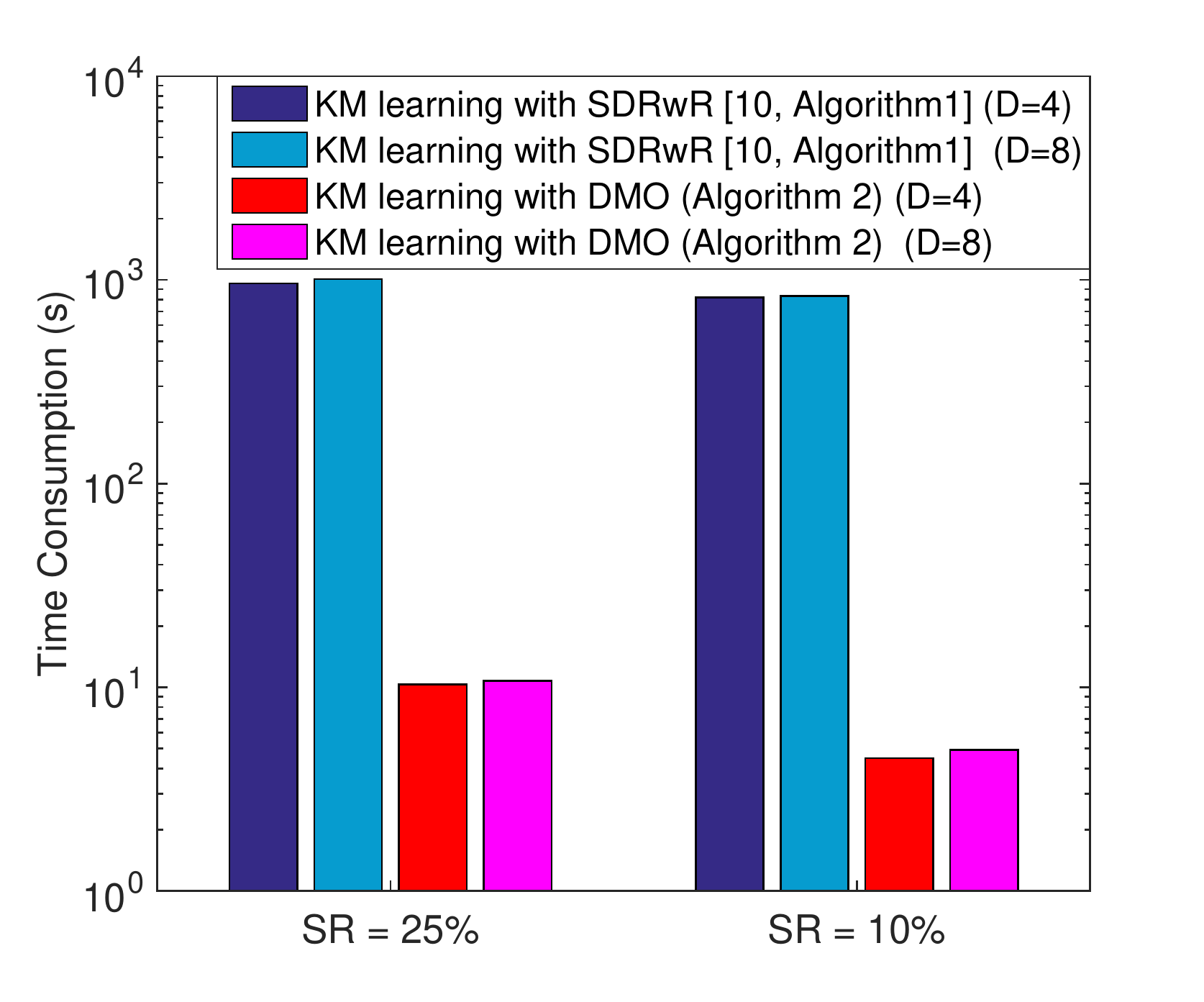}
\caption{Time consumption comparison between the conventional KM learning with SDRwR \cite[Algorithm 1]{Chan19} and the proposed KM learning with DMO (i.e., Algorithm \ref{Overall KM-learning-based Beam-alignment Algorithm}) ($N_t = N_r = |\cI_{\cF}| = |\cI_{\cW}| = 16$, $T_{\text{FE}}=T_{\text{KS}}=8$, $L=5$, $\alpha=0.05$ and $\tau=12$ dB).} \label{fig.1}
\end{figure}

\section{Simulation Results} \label{Section IV}
In this section, we provide the numerical results of the proposed beam alignment approach in mmWave MIMO channels. We adopt the physical representation of sparse mmWave MIMO channels \cite{Alkhateeb14,Heath16} and assume that the rank of the channel matrix is $1$. We set $N_t = N_r = |\cI_{\cF}| = |\cI_{\cW}|$, $T_{\text{FE}} = T_{\text{KS}} = 8$, and $L=5$ throughout the simulation. The sampling rate, defined as the ratio of the number of beam pairs in the subsampled training codebook to the total number of the beam pairs in the original codebook, is given by $\text{SR} = |\cI_{\cF}^{\text{train}}\times\cI_{\cW}^{\text{train}}|/|\cI_{\cF}\times\cI_{\cW}|$. We obtain the numerical results by conducting $100$ Monte Carlo simulations.

\begin{figure}[t]
\centering
\includegraphics[width=7.2cm, height=5.8cm]{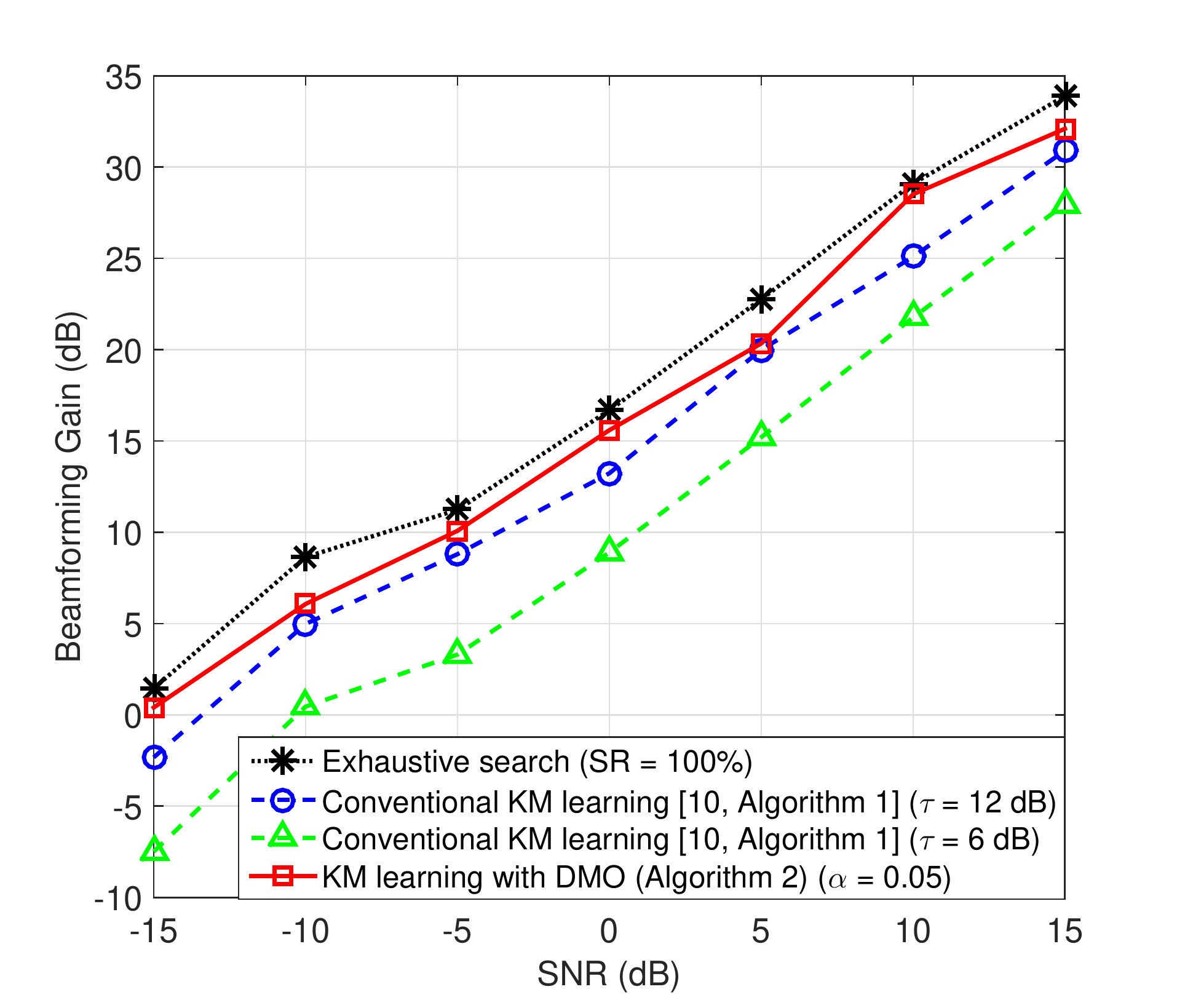}
\caption{Beamforming gain comparison between the conventional KM learning \cite[Algorithm 1]{Chan19} and Algorithm \ref{Overall KM-learning-based Beam-alignment Algorithm} ($N_t = N_r = |\cI_{\cF}| = |\cI_{\cW}| = 16$, $D=8$, $T_{\text{FE}}=T_{\text{KS}}=8$, $L=5$, and $\text{SR}=25\%$).} \label{fig.2}
\end{figure}

In Fig. \ref{fig.1}, the average time (in seconds) consumed to execute Algorithm \ref{Overall KM-learning-based Beam-alignment Algorithm} (i.e., the proposed KM learning with DMO) is compared with the conventional KM learning with SDRwR (i.e., Algorithm $1$ in \cite{Chan19}) for $\text{SR}=25\%,~10\%$ and $D=4,~8$, respectively. Notice that we measure the running time by using ``cputime'' function in MATLAB. We set the target false alarm rate $\alpha=0.05$ for the KS test in Algorithm \ref{Overall KM-learning-based Beam-alignment Algorithm} and $\tau=12$ dB for the FE in \cite[Algorithm 1]{Chan19} to obtain the empirical probabilities for the training set. We further assume $N_t = N_r = |\cI_{\cF}| = |\cI_{\cW}| = 16$ here.
It is clear from Fig. \ref{fig.1} that the proposed Algorithm \ref{Overall KM-learning-based Beam-alignment Algorithm} substantially accelerates the computational speed compared to the conventional KM learning with SDRwR \cite[Algorithm 1]{Chan19}; more than $100$ times of improvement is observed.

\begin{figure}[t]
\centering
\includegraphics[width=7.2cm, height=5.8cm]{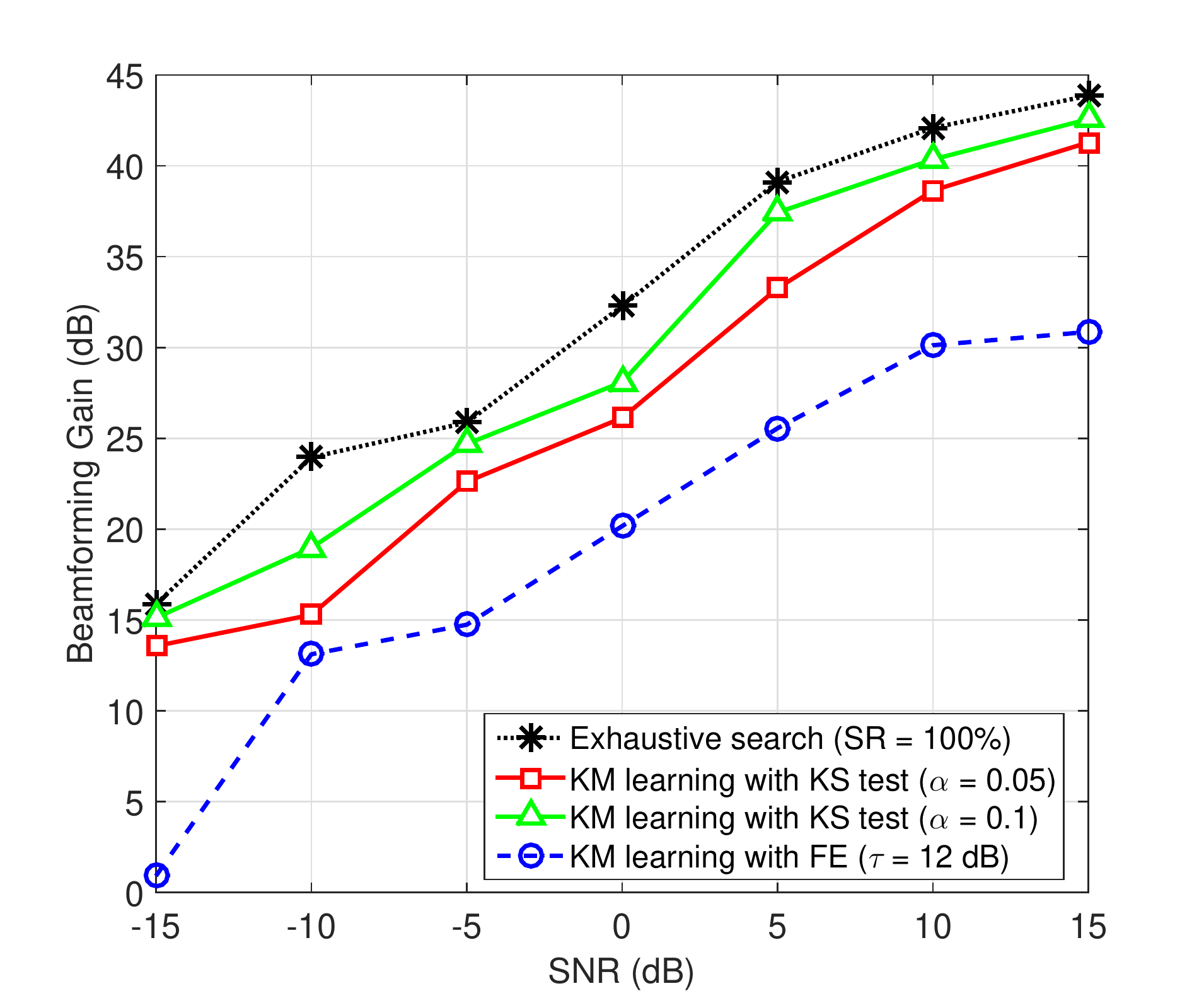}
\caption{Beamforming gain comparison between Algorithm \ref{Overall KM-learning-based Beam-alignment Algorithm} and the KM learning with FE ($N_t = N_r = |\cI_{\cF}| = |\cI_{\cW}| = 64$, $D=8$, $T_{\text{FE}}=T_{\text{KS}}=8$, $L=5$, and $\text{SR}=25\%$).} \label{fig.3}
\end{figure}

In Fig. \ref{fig.2}, the average beamforming gains of the conventional KM learning algorithm \cite[Algorithm 1]{Chan19} and the proposed Algorithm \ref{Overall KM-learning-based Beam-alignment Algorithm} are evaluated for $N_t = N_r = |\cI_{\cF}| = |\cI_{\cW}| = 16$, $D=8$, and $\text{SR}=25\%$, where given the selected beam pair $(\bff_{t^\star}, \bw_{r^\star})$, based on each algorithm, the beamforming gain is calculated from \eqref{received signal for (t,r)} by $G_{t^\star,r^\star}=\|\bw^*_{r^\star}\bH\bff_{t^\star}\|_2^2/\sigma_n^2$. In Fig. \ref{fig.2}, the curves of the conventional KM learning are evaluated for different threshold values $\tau=6,~12$ dB, while the curve of Algorithm \ref{Overall KM-learning-based Beam-alignment Algorithm} is evaluated for $\alpha=0.05$. Moreover, the performance of the exhaustive search, a benchmark, consuming $|\cI_{\cF}\times\cI_{\cW}|$ channel uses for the beam alignment, is also presented. As can be seen from Fig. \ref{fig.2},  Algorithm \ref{Overall KM-learning-based Beam-alignment Algorithm} shows an improvement compared to the conventional KM learning with substantially reduced complexity.

The efficacy of the proposed KS test in improving the proposed KM learning capability is further evaluated. In Fig. \ref{fig.3}, we show the beamforming gain of Algorithm \ref{Overall KM-learning-based Beam-alignment Algorithm} and the one by replacing the KS test in Algorithm \ref{Overall KM-learning-based Beam-alignment Algorithm} with the FE as shown in \eqref{FE} for $N_t = N_r = |\cI_{\cF}| = |\cI_{\cW}| = 64$, $D=8$, and $\text{SR}=25\%$. Fig. \ref{fig.3} illustrates that, with a false alarm rate guarantee, the proposed KS test substantially improves the learning capability of the KM.

\section{Conclusions} \label{Section V}
In this paper, we proposed an enhanced KM learning algorithm for beam alignment in mmWave MIMO channels. Based on DMO, one key step in learning the KM parameters, i.e., the BQP, was substantially accelerated. By considering the uncertainty brought by FE due to subjective threshold setting, the KS test was proposed to obtain the empirical probabilities of the training set, based on the detection-theoretic criterion. The simulation results demonstrate that the proposed KM learning with DMO and KS shows better beam alignment performance with a substantially reduced computational complexity compared to the conventional KM algorithm.

\vskip 1\baselineskip
\vskip 1\baselineskip

\bibliographystyle{IEEEtran}
\bibliography{IEEEabrv,draft_KM}

\end{document}